\theoremstyle{plain}
\newtheorem{theorem}{Theorem}[section]
\newtheorem{lemma}[theorem]{Lemma}
\newtheorem*{theorem*}{Theorem}
\theoremstyle{definition}
\newtheorem*{definition*}{Definition}
\newtheorem{definition}[theorem]{Definition}
\newtheorem{example}[theorem]{Example}
\newtheorem{remark}[theorem]{Remark}
\theoremstyle{remark}
\newtheorem*{remark*}{Remark}
\newtheorem*{property*}{Property}
\newcommand{\R} {\ensuremath{\mathbb{R}}} %reals
\newcommand{\eps}{\ensuremath{\varepsilon}}
\newcommand{\cel}[1]{{\lceil {#1} \rceil}}
\newcommand{\refeq}[1]{(\ref{#1})}
\newcommand{\LDOTS}{,\,\ldots,\,}     % smart "..."
\newcommand{\X}{{\mathbf{X}}}
\newcommand{\Y}{{\mathbf{Y}}}
\newcommand{\M}{{\mathcal{M}}}
\newcommand{\C}{{\mathcal{C}}}
\newcommand{\distrib}[1]{{\mathbf{\Delta}(#1)}}
\newcommand{\puto}{{\eps^*}}
\title{Privacy-Compatibility For General Utility Metrics}
\author{
Robert Kleinberg \footnotemark[1] \footnotemark[2]  \and
Katrina Ligett \footnotemark[1] \footnotemark[3]
}
\begin{document}
\date{}
\maketitle
\thispagestyle{empty}

\begin{abstract}
In this note, we present a complete characterization of the utility metrics that allow for non-trivial differential privacy guarantees.
\end{abstract}

\newpage
\setcounter{page}{1}

\section{Introduction}
\label{sec:intro}

The field of data privacy is, at its heart, the study of tradeoffs
between utility and privacy.  The theoretical computer science
community has embraced a strong and compelling definition of privacy
--- differential privacy~\cite{DN03,DMNS06} --- but utility
definitions, quite naturally, depend on the application at hand.  For
a given function $f$, can we achieve arbitrarily close to perfect
utility by relaxing the privacy parameter sufficiently?  We show that
this question has a satisfyingly simple answer: yes, if and only if
the image of $f$ has compact completion.  Furthermore, in this case
there exists a single base measure $\mu$ such that conventional
exponential mechanisms based on $\mu$ are capable of achieving
arbitrarily good utility.

\section{Definitions}
\label{sec:defs}

We are given two metric spaces $(\X,\rho)$ and 
$(\Y,\sigma)$ and a continuous function $f: \X \rightarrow \Y.$
We think of the input database as being an element $x \in \X$,
and our goal is to disclose an approximation to the value of 
$f(x)$ while preserving privacy.  
To allow for a cleaner exposition, we will assume
throughout this paper that $f$ has Lipschitz constant $1$,
i.e.\ $\sigma(f(x),f(z)) \leq \rho(x,z)$ for all $x,z \in \X.$
All of our results generalize to arbitrary Lipschitz
continuous functions, an issue that we return to in
Remark~\ref{remark:sensitivity}.

\begin{definition}  \label{def:mech}
A \emph{mechanism} is a function $\M \,:\, \X
\rightarrow \distrib{\Y}$, 
where $\distrib{\Y}$ denotes the set of all
Borel probability measures on $Y$.  
For a point $x \in \X$, we will often
denote the probability measure $\M(x)$ using the alternate notation
$\M_x$.
\end{definition}
 
\begin{definition} \label{def:priv-ut}
For $\eps>0$, we say that a mechanism $\M$
\emph{achieves $\eps$-differential privacy} if the following
relation holds for every $x,z \in X$ and every Borel set $T \subseteq \Y$:
\begin{equation} \label{eq:priv}
\M_x(T) \leq e^{\eps \rho(x,z)} \M_z(T).\footnote{A number of results in the literature, including recent work of Roth
and Roughgarden~\cite{RR10} on mechanisms for predicate queries,
achieve only a weakened definition of privacy known as $(\eps,
\delta)$-differential privacy; such results do not fit in the
framework presented here.}
\end{equation}
For $\gamma,\delta>0$, we say that 
$\M$ \emph{achieves $\gamma$-utility with probability
at least $1-\delta$} if the following relation holds for every
$x \in X$:
\begin{equation} \label{eq:ut}
\M_x(B_{\sigma}(f(x),\gamma)) \geq 1-\delta.
\end{equation}
We abbreviate this relation by saying 
that $\M$ achieves $(\gamma,\delta)$-utility.
\end{definition}

\begin{definition} \label{def:tradeoff}
Given a function $f : \X \rightarrow \Y$,
the \emph{privacy-utility tradeoff} of $f$
is the function 
$$
\puto(\gamma,\delta) = \inf \{ \eps > 0 \,|\, \exists
\mbox{ a mechanism $\M$ satisfying
$\eps$-differential privacy and $(\gamma,\delta)$-utility} \},
$$
where the right side is interpreted as $\infty$ if the set 
in question is empty. 
% We say that $f$ is 
% \emph{privacy-compatible} if $\eps^*(\gamma,\delta) < \infty$
% for all $\gamma, \delta > 0$.
\end{definition}

\begin{remark} \label{remark:sensitivity}
In prior work on differential privacy, it is more customary
to express differential privacy guarantees in terms of an
\emph{adjacency relation} on inputs, rather than a metric
space on the inputs.  In this framework, the 
sensitivity of $f$ (the maximum of $|f(a)-f(b)|$ over all
adjacent pairs $a,b$) plays a pivotal role in determining
the privacy achieved by a mechanism.  The Lipschitz constant
of $f$ plays the equivalent role in our setting.

One could of course equate the two frameworks by defining
the privacy metric $\rho$ to be the shortest-path metric
in the graph defined by the adjacency relation.  This
would equate the Lipschitz constant of $f$ with its
sensitivity.  However, it is much more convenient to
describe our mechanisms and their analysis under the
assumption that $f$ has Lipschitz constant $1$; for any
Lipschitz continuous $f$ this can trivially be achieved
by rescaling both $\rho$ and the corresponding privacy
bound by $C$, the Lipschitz constant of $f$.

Thus, for example, if one is given a function $f$
and wishes to know whether there exists a mechanism
achieving $\eps$-differential privacy and $(\gamma,\delta)$-utility,
the answer is yes if and only if $\eps/\eps^*(\gamma,\delta)$ is 
greater than the Lipschitz constant (i.e.,~sensitivity) of $f$.
In cases where the sensitivity $\Delta_f$
depends on the number of 
points in an input database, $N$, the
relation $\eps/\eps^*(\gamma/\delta) \geq \Delta_f$ can be
used to solve for $N$ in terms of the parameters
$\eps,\gamma,\delta.$  For example, in many papers
(e.g.~\cite{BLR08}) $\Delta_f = 1/N$ and then we find
that $N = \eps^*(\gamma,\delta)/\eps$ is the minimum
number of points in the input database necessary to
achieve $\eps$-differential privacy and $(\gamma,\delta)$-utility.
\end{remark}

\begin{remark}
Our definition of utility captures many prior formulations.  
For setings where the output space is simply $\mathbb{R}$, the
  traditional utility metric reflecting the difference between the
  given answer and the true answer is easily captured in our
  framework.  A variety of prior work on problems involving more
  complex outputs can also be cast as measuring utility in a metric
  space.  For example, Blum et al.~\cite{BLR08} propose utility with
  respect to a concept class $\mathcal{H}$, and define the utility of
  a candidate output database $y$ on an input $x$ as $\max_{h \in
    \mathcal{H}} | h(x) - h(y)|$.  This setup can be viewed as mapping
  input databases $x$ to vectors $(h_1(x), h_2(x), \ldots)$ and taking the
  utility metric $\sigma$ to be the $L^{\infty}$ metric on output vectors.
  Hardt and Talwar~\cite{HT10} use $L^2$ as their utility metric, but
  whereas they compute the mean square (or $p$-th moment) of its
  distribution, we define disutility to be the probability that the $\sigma$ value
  exceeds $\gamma$.
\end{remark}

\begin{definition}  Given a  measure $\mu$ on $\X$,
and a scalar $\beta>0$, the \emph{(conventional) exponential mechanism} 
$\C^{\mu;\beta}$ is given by the formula:
\begin{equation} \label{eq:exp-mech}
\C_x^{\mu;\beta}(T) = 
\frac{\int_T e^{-\beta \sigma(f(x),y)} \, d \mu(y)}
{\int_{\Y} e^{-\beta \sigma(f(x),y)} \, d \mu(y)},
\end{equation}
provided that the denominator is finite.  Otherwise
$\C_x^{\mu;\beta}$ is undefined.\footnote{We use the word ``conventional'' here to refer to the rich subclass of exponential mechanisms whose score function is $\sigma$; however, not all exponential mechanisms fall in this class.}
\end{definition}
The differential privacy guarantee for exponential mechanisms
is given by the following theorem, whose proof parallels the
original proof of McSherry and Talwar~\cite{McST07} and is
given in the Appendix.
\begin{theorem} \label{thm:exp-mech}
If $f$ has Lipschitz constant $C$ then the 
conventional exponential mechanism
$\C^{\mu;\beta}$ is
$(2 C \beta)$-differentially private for every $\mu$.
\end{theorem}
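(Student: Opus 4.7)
The plan is to bound the ratio $\C_x^{\mu;\beta}(T)/\C_z^{\mu;\beta}(T)$ for arbitrary $x,z \in \X$ and Borel $T \subseteq \Y$, and show it is at most $e^{2C\beta\rho(x,z)}$. Writing out the ratio using the defining formula \eqref{eq:exp-mech}, it factors as a product of two pieces: the ratio of numerators $\int_T e^{-\beta \sigma(f(x),y)} \, d\mu(y) \,/\, \int_T e^{-\beta \sigma(f(z),y)} \, d\mu(y)$, and the ratio of denominators $\int_\Y e^{-\beta \sigma(f(z),y)} \, d\mu(y) \,/\, \int_\Y e^{-\beta \sigma(f(x),y)} \, d\mu(y)$. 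I will bound each factor by $e^{C\beta\rho(x,z)}$ separately.

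The key pointwise estimate is an application of the triangle inequality together with the Lipschitz property of $f$: for every $y \in \Y$,
\begin{equation*}
|\sigma(f(x),y) - \sigma(f(z),y)| \;\leq\; \sigma(f(x),f(z)) \;\leq\; C\rho(x,z).
\end{equation*}
Exponentiating gives $e^{-\beta\sigma(f(x),y)} \leq e^{C\beta\rho(x,z)} \, e^{-\beta\sigma(f(z),y)}$ pointwise in $y$, and symmetrically with $x$ and $z$ swapped. Integrating the first inequality over $T$ against $\mu$ bounds the numerator ratio; integrating the swapped inequality over all of $\Y$ against $\mu$ bounds the denominator ratio. Multiplying these two bounds gives the claimed $e^{2C\beta\rho(x,z)}$ factor, matching the definition of $(2C\beta)$-differential privacy in \eqref{eq:priv}.

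The only mild subtlety is a well-definedness check: the theorem only asserts privacy where the mechanism is defined, i.e.\ where the normalizing integral in the denominator of \eqref{eq:exp-mech} is finite. The pointwise inequality above shows that if the denominator is finite at one of $x, z$ then it is finite at the other (they differ by at most a multiplicative factor of $e^{C\beta\rho(x,z)}$), so the statement is vacuous precisely when the mechanism is undefined at one of the two inputs, and otherwise the argument goes through without incident. I do not anticipate a real obstacle here; the entire proof is a two-line triangle-inequality computation once the integrand has been split in the manner above.
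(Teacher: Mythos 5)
Your proof is correct and follows essentially the same route as the paper's (Lemma~\ref{lem:em-privacy} in the Appendix): the pointwise reverse-triangle-inequality estimate $|\sigma(f(x),y)-\sigma(f(z),y)| \leq \sigma(f(x),f(z)) \leq C\rho(x,z)$, integrated separately over $T$ for the numerator and over $\Y$ for the denominator, with the two factors of $e^{C\beta\rho(x,z)}$ multiplied to give the $2C\beta$ bound. Your added remark on well-definedness of the normalizing integral is a small but valid refinement the paper leaves implicit.
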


\section{A topological criterion for privacy-compatibility}
\label{sec:nasc}

A surprising result of Blum et al.~\cite{BLR08} shows that, in the
natural setting of one-dimensional range queries over continuous
domains, \emph{no} mechanism can simultaneously achieve non-trivial
privacy and utility guarantees.  What is it about this application
that makes privacy fundamentally impossible?  In this section, we
introduce a definition of \emph{privacy-compatibility} and give a
complete characterization of the applications that satisfy this
definition.

\begin{definition}
We say that $f$ is 
\emph{privacy-compatible} if $\eps^*(\gamma,\delta) < \infty$
for all $\gamma, \delta > 0$.
\end{definition}

Suppose that $f$ is Lipschitz continuous and that 
the metric space $(\X,\rho)$ is bounded.  We now
prove that $f$ is privacy-compatible if and only
if the completion of the metric space $f(\X)$ is
compact. 
Observe that rescaling the metrics $\rho,\sigma$ does not
affect the question of whether $f$ is privacy-compatible
nor whether $f(\X)$ has compact completion, but it does 
rescale the Lipschitz constant of $f$ and the diameter of $\X$.  
Accordingly,
we may assume without loss of generality that the Lipschitz constant
of $f$ and the diameter of $\X$ are both bounded above by $1$, i.e.
\begin{equation} \label{eq:rhosigma}
\sigma(f(x_1),f(x_2)) \leq \rho(x_1,x_2) \leq 1
\end{equation}
for all $x_1,x_2 \in \X.$ 

\begin{definition}  A probability measure $\mu$ on a metric
space $(\X,\sigma)$ is \emph{uniformly positive} if it is 
the case that for all $r>0$,
\[
\inf_{x \in X} \mu(B_{\sigma}(x,r)) > 0.
\]
\end{definition}

\begin{example}  The uniform measure on $[0,1]$ is uniformly
positive.  The Gaussian measure on $\R$ is not uniformly 
positive because one can find intervals of width $2r$ with
arbitrarily small measure by taking the center of the
interval to be sufficiently far from $0$.
\end{example}

%% The following theorem summarizes some well known facts from
%% metric space theory.  For the proof, see Appendix~\ref{sec:topo}.

%% \begin{theorem} \label{thm:topology}
%% For a metric space $(\X,\sigma)$, the following are equivalent:
%% \begin{enumerate}
%% \item \label{top:1}
%% The completion of $\X$ is a compact topological space.
%% \item \label{top:2}
%% For every $r>0$, $\X$ can be covered by a finite collection
%% of balls of radius $r$.
%% \item \label{top:3}
%% For every $r>0$, $\X$ does not contain an infinite collection
%% of pairwise disjoint balls of radius $r$.
%% \end{enumerate}
%% \end{theorem}

%% Now we come to our main theorem.

\begin{theorem} 
If the Lipschitz constant of $f$ and the diameter of $X$ are both bounded above by $1$,
then the following are equivalent:
\begin{enumerate}
% {\bf (1)} 
\item \label{tfae:1}
$f$ is privacy-compatible;
% {\bf (2)} 
\item \label{tfae:2}
For every $\gamma,\delta>0$, there is a conventional exponential
mechanism that achieves 
$(\gamma,\delta)$-utility; % with probability at least $1-\delta$;
% {\bf (3)} 
\item \label{tfae:3}
There exists a uniformly positive measure on $(f(\X),\sigma)$;
% {\bf (4)} 
\item \label{tfae:4}
The completion of $(f(\X),\sigma)$ is compact.
\end{enumerate}
\label{thm:qualitative}
\end{theorem}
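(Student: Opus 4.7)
My plan is to prove the equivalence as a cycle $(\ref{tfae:2}) \Rightarrow (\ref{tfae:1}) \Rightarrow (\ref{tfae:4}) \Rightarrow (\ref{tfae:3}) \Rightarrow (\ref{tfae:2})$, since each implication then has a natural and distinct flavor.

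The implication $(\ref{tfae:2}) \Rightarrow (\ref{tfae:1})$ is immediate from Theorem~\ref{thm:exp-mech}: any conventional exponential mechanism is $2\beta$-differentially private (recall $C \leq 1$), so the existence of such a mechanism with $(\gamma,\delta)$-utility gives a finite upper bound on $\eps^*(\gamma,\delta)$. For $(\ref{tfae:1}) \Rightarrow (\ref{tfae:4})$ I argue the contrapositive: if the completion of $(f(\X),\sigma)$ is not compact, then since completions are complete, $f(\X)$ is not totally bounded, so there exists $\gamma>0$ and an infinite sequence $y_1,y_2,\dots \in f(\X)$ that is $\gamma$-separated (chosen greedily from a $\gamma$-net obstruction). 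Lift to $x_i \in \X$ with $f(x_i)=y_i$; by the diameter bound $\rho(x_1,x_i)\leq 1$. Setting $\gamma' = \gamma/3$ and $\delta' = 1/2$, the balls $B_{\sigma}(y_i,\gamma')$ are pairwise disjoint, so for any mechanism $\M$ satisfying $\eps$-differential privacy and $(\gamma',\delta')$-utility, $\eps$-DP applied with reference point $x_1$ yields $\M_{x_1}(B_\sigma(y_i,\gamma')) \geq e^{-\eps}(1-\delta')$ for every $i$, and disjointness forces $k e^{-\eps}/2 \leq 1$ for every $k$, a contradiction. Hence $\eps^*(\gamma',1/2)=\infty$.

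The step $(\ref{tfae:4}) \Rightarrow (\ref{tfae:3})$ is the most delicate: I need to construct a uniformly positive probability measure on $f(\X)$ itself, not on its completion $K$. Total boundedness of $K$ supplies a finite $1/(2n)$-net for each $n$; using density of $f(\X)$ in $K$ I can replace each net point by a point of $f(\X)$ within $1/(2n)$, obtaining a finite set $\{c_{n,1},\dots,c_{n,k_n}\} \subseteq f(\X)$ such that the balls $B_\sigma(c_{n,j},1/n)$ cover $f(\X)$. Define
\[
\mu \;=\; \sum_{n=1}^{\infty} \frac{2^{-n}}{k_n} \sum_{j=1}^{k_n} \delta_{c_{n,j}}.
\]
For any $y \in f(\X)$ and any $r>0$, choosing $n$ with $1/n \leq r$ guarantees some $c_{n,j} \in B_\sigma(y,r)$, giving $\mu(B_\sigma(y,r)) \geq 2^{-n}/k_n > 0$ uniformly in $y$.

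Finally for $(\ref{tfae:3}) \Rightarrow (\ref{tfae:2})$, given $\gamma,\delta>0$, let $p = \inf_{x \in \X} \mu(B_\sigma(f(x),\gamma/2)) > 0$. Bounding the contribution of the complement of $B_\sigma(f(x),\gamma)$ in the numerator of~\eqref{eq:exp-mech} by $e^{-\beta\gamma}$, and bounding the denominator below by $e^{-\beta\gamma/2} \cdot p$ using uniform positivity, the ratio of ``bad'' mass to total mass is at most $e^{-\beta\gamma/2}/p$. Choosing $\beta \geq (2/\gamma)\ln(1/(p\delta))$ makes this at most $\delta$, so $\C^{\mu;\beta}$ achieves $(\gamma,\delta)$-utility. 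The main obstacle I anticipate is in $(\ref{tfae:4}) \Rightarrow (\ref{tfae:3})$, where one must resist the temptation to take $\mu$ supported on an arbitrary countable dense set (which typically fails to be uniformly positive) and instead use the \emph{uniform} finite coverings provided by total boundedness.
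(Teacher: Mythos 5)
Your proof is correct and follows essentially the same route as the paper: the identical cycle \refeq{tfae:2}$\Rightarrow$\refeq{tfae:1}$\Rightarrow$\refeq{tfae:4}$\Rightarrow$\refeq{tfae:3}$\Rightarrow$\refeq{tfae:2}, the same atomic measure built from a sequence of finite coverings with geometrically decaying weights, the same choice $\beta = (2/\gamma)\ln(1/(p\delta))$ in the exponential-mechanism analysis, and the same disjoint-balls obstruction for \refeq{tfae:1}$\Rightarrow$\refeq{tfae:4}. The only cosmetic differences are that you re-derive the covering/packing dichotomy inline (the paper invokes its appendix Theorem~\ref{thm:top}, which yields finite covers with centers already in $f(\X)$, making your density detour through the completion unnecessary) and that you sum the differential-privacy lower bounds over all $k$ disjoint balls rather than pigeonholing a single ball of mass below $e^{-\alpha}/2$.
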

\begin{proof}
For simplicity, throughout the proof we assume 
without loss of generality that $\Y = f(\X)$.  
The notation $B(y,r)$
denotes the ball of radius $r$ around $y$ in the
metric space $(\Y,\sigma)$.

%{\bf (2) $\Rightarrow$ (1)}
 {\bf \refeq{tfae:2} $\Rightarrow$ \refeq{tfae:1}} 
The exponential mechanism $\M^{\mu;\beta}$ achieves
$(2 \beta)$-differential privacy.  
%The proof follows
%the original proof of McSherry and Talwar~\cite{McST07}
%and is given in Appendix~\ref{sec:topo}.

%{\bf (3) $\Rightarrow$ (2)}
 {\bf \refeq{tfae:3} $\Rightarrow$ \refeq{tfae:2}}
For $\mu$ a uniformly positive measure on $(Y,\sigma)$,
and $\gamma,\delta>0$, let $m = \inf_{y \in \Y} 
\mu(B(y, \gamma/2))$ and let
$\beta = \frac{2}{\gamma} \ln \left( \frac{1}{\delta m} \right).$
We claim that the exponential mechanism $\M = \M^{\mu;\beta}$ achieves
$(\gamma,\delta)$-utility.  To see this,
let $x \in \X$ be an arbitrary point, let $z = f(x)$, and let 
\[
a = \int_{B(x,\gamma)} e^{-\beta \sigma(z,y)} \, d \mu(y) \quad \qquad
b = \int_{\X \setminus B(x,\gamma)} e^{-\beta \sigma(z,y)} \, d \mu(y).
\]
We have 
\begin{align*}
a &\geq \int_{B(z,\gamma/2)} e^{- \beta \sigma(z,y)} \, d \mu(y) 
\geq \int_{B(z,\gamma/2)} e^{- \beta \gamma / 2} \, d \mu(y) =
e^{-\beta \gamma/2} \mu(B(z,\gamma/2)) \geq 
e^{-\beta \gamma/2} m \\
b &< \int_{Y} e^{-\beta \gamma} \, d \mu(y) = e^{-\beta \gamma}.
\end{align*}
Hence, for every $x \in \X$,
\begin{align*}
\M_x(B(f(x),\gamma)) = \frac{a}{a+b} &= 1 - \frac{b}{a+b}
> 1 - \frac{e^{-\beta \gamma}}{e^{-\beta \gamma / 2}m} 
= 1 - \frac{1}{e^{\beta \gamma / 2}m}
= 1 - \delta.
\end{align*}

%{\bf (4) $\Rightarrow$ (3)}
 {\bf \refeq{tfae:4} $\Rightarrow$ \refeq{tfae:3}}
We use the following fact from the topology of metric spaces:
a complete metric space is compact if and only, for every $r$, 
if it has a finite covering by balls of radius $r$.
(See Theorem~\ref{thm:top} in the Appendix.)
% Assume that the completion of $(\Y,\sigma)$ is compact.
% , so that all
% of the equivalent conditions in Theorem~\ref{thm:top} are 
% satisfied.  
For $i=1,2,\ldots,$ let $C_i = \{y_{i,1} \LDOTS y_{i,n(i)}\}$
be a finite set of points such that the balls of radius $2^{-i}$
centered at the points of $C_i$ cover $\Y$.  Now define a 
probability measure $\mu$
supported on the countable set $C = \cup_{i=1}^{\infty} C_i,$ by 
specifying that for $y \in C,$
$
\mu(y) = \sum_{i \,:\, y \in C_i} \left( \frac{1}{2^i n(i)} \right).
$
Equivalently, one can describe $\mu$ by saying that a procedure for
randomly sampling from $\mu$ is to flip a fair coin until heads 
comes up, let $i$ be the number of coin flips, and sample a point 
of $C_i$ uniformly at random.  We claim that $\mu$ is uniformly
positive.  To see this, given any $r>0$ let $i = \cel{\log_2(1/r)},$
so that $2^{-i} \leq r.$  For any point $y \in \Y$, there exists
some $j \; (1 \leq j \leq n(i))$ such that $y \in B(y_{i,j},2^{-i}).$
This implies that $B(y,r)$ contains $y_{i,j}$, hence
$
\mu(B(y,r)) \geq \mu(y_{i,j}) \geq \frac{1}{2^i n(i)}.
$
The right side depends only on $r$ (and not on $y$), hence
$\inf_{y \in \Y} \mu(B(y,r))$ is strictly positive, as desired.

%{\bf (1) $\Rightarrow$ (4)}
 {\bf \refeq{tfae:1} $\Rightarrow$ \refeq{tfae:4}}
We prove the contrapositive.  Suppose that the completion of
$\Y$ is not compact.  Once again using point-set topology 
(Theorem~\ref{thm:top}) this implies that there exists
an infinite collection of pairwise disjoint balls of radius $r$,
for some $r>0$.  Let $y_1,y_2,\LDOTS$ be the centers of these
balls.  By our assumption that $\Y = f(\X)$, we may choose
points $x_i$ such that $y_i = f(x_i)$ for all
$i \geq 1$.  Suppose we are given a mechanism $\M$ that achieves 
$r$-utility with probability at least $1/2$.  For every 
$\alpha>0$ we must show that $\M$ does not achieve 
$\alpha$-differential privacy.  The relation
$
\sum_{i=1}^{\infty} \M_{x_1}(B(y_i,r)) \leq 1
$
implies that there exists some $i$ such that 
\begin{equation} \label{eq:tfae.1}
\M_{x_1}(B(y_i,r)) < e^{- \alpha}/2.
\end{equation}
The fact that $\M$ achieves $r$-utility with probability at 
least $1/2$ implies that 
\begin{equation} \label{eq:tfae.2}
\M_{x_i}(B(y_i,r)) > 1/2.
\end{equation}
Combining \eqref{eq:tfae.1} with \eqref{eq:tfae.2} leads to
\begin{equation} \label{eq:tfae.3}
\M_{x_i}(B(y_i,r)) > e^{\alpha} \M_{x_1}(B(y_i,r)) \geq
e^{\alpha \rho(x_i,x_1)} \M_{x_1}(B(y_i,r)),
\end{equation}
hence $\M$ violates $\alpha$-differential privacy.
\end{proof}

\bibliographystyle{abbrv}
\bibliography{topobib}

\appendix
\section{Appendix}
\label{sec:topo}

\begin{lemma} \label{lem:em-privacy}
If $f : \X \rightarrow \Y$ has Lipschitz constant $1$, then 
the conventional exponential mechanism $\M^{\mu;\beta}$ achieves
$(2 \beta)$-differential privacy.
\end{lemma}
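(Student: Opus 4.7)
The plan is to follow the standard McSherry--Talwar style argument, adapted to our metric-space setup. The key estimate is the following pointwise bound on the ratio of score weights: for any $x,z \in \X$ and any $y \in \Y$, the triangle inequality for $\sigma$ gives $\sigma(f(x),y) - \sigma(f(z),y) \leq \sigma(f(x),f(z))$, and the Lipschitz assumption on $f$ gives $\sigma(f(x),f(z)) \leq \rho(x,z)$. Hence
\[
e^{-\beta \sigma(f(x),y)} \leq e^{\beta \rho(x,z)} \cdot e^{-\beta \sigma(f(z),y)},
\]
and the symmetric bound holds with $x$ and $z$ swapped.

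With this inequality in hand, I would fix $x,z \in \X$ and a Borel set $T \subseteq \Y$, and write the ratio
\[
\frac{\C_x^{\mu;\beta}(T)}{\C_z^{\mu;\beta}(T)}
= \frac{\int_T e^{-\beta \sigma(f(x),y)} \, d\mu(y)}{\int_T e^{-\beta \sigma(f(z),y)} \, d\mu(y)} \cdot \frac{\int_{\Y} e^{-\beta \sigma(f(z),y)} \, d\mu(y)}{\int_{\Y} e^{-\beta \sigma(f(x),y)} \, d\mu(y)}.
\]
Applying the pointwise bound under the first integral yields a factor of $e^{\beta \rho(x,z)}$; applying the swapped bound under the integrals in the denominator factor yields another factor of $e^{\beta \rho(x,z)}$. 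Multiplying gives $e^{2 \beta \rho(x,z)}$, which is exactly the $(2\beta)$-differential privacy guarantee from Definition~\ref{def:priv-ut}.

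There is no real obstacle here --- the proof is a two-line chain of triangle inequality and Lipschitz continuity, sandwiched between two integral monotonicity steps. The only mild care needed is to verify that the normalizing integrals are finite (which is part of the hypothesis that $\C^{\mu;\beta}$ is well-defined) and that the ratio manipulation is valid when $\C_z^{\mu;\beta}(T) = 0$, in which case the pointwise bound on the integrand forces $\C_x^{\mu;\beta}(T) = 0$ as well, so the required inequality $\C_x^{\mu;\beta}(T) \leq e^{2\beta \rho(x,z)} \C_z^{\mu;\beta}(T)$ holds trivially.
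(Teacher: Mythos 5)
Your proof is correct and follows essentially the same route as the paper: the triangle inequality plus the Lipschitz bound give the pointwise estimate $e^{-\beta\sigma(f(x),y)} \leq e^{\beta\rho(x,z)} e^{-\beta\sigma(f(z),y)}$, which is applied once to the numerator integral over $T$ and once (in the reverse direction) to the normalizing integral over $\Y$, and the quotient yields the factor $e^{2\beta\rho(x,z)}$, exactly as in the paper's adaptation of McSherry--Talwar. Your additional remark handling the case $\C_z^{\mu;\beta}(T)=0$ is a small point of rigor the paper leaves implicit, and it is handled correctly.
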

\begin{proof}
The proof follows
the original proof of McSherry and Talwar~\cite{McST07}.
The triangle inequality implies that for any $x,z$
\begin{align*}
\int_T e^{-\beta \sigma(f(x),y) \, d\mu(y)} & \leq
\int_T e^{-\beta [\sigma(f(z),y) - \sigma(f(x),f(z))] } \, d\mu(y)\\
& =
e^{\beta \sigma(f(x),f(z))} \int_T e^{-\beta \sigma(f(x),y)} \, d\mu(y)\\
&\leq
e^{\beta \rho(x,z)} \int_T e^{-\beta \sigma(f(z),y)} \, d\mu(y) \\
\int_{\Y} e^{-\beta \sigma(f(x),y) \, d\mu(y)} & \geq
\int_{\Y} e^{-\beta [\sigma(f(z),y) + \sigma(f(x),f(z))] } \, d\mu(y)\\
&=
e^{-\beta \sigma(f(x),f(z))} \int_{\Y} e^{-\beta \sigma(f(x),y)} \, d\mu(y)\\
&\geq
e^{-\beta \rho(x,z)} \int_{\Y} e^{-\beta \sigma(f(z),y)} \, d\mu(y).
\end{align*}
The inequality $\M_x(T) \leq e^{2 \beta \rho(x,z)} \M_z(T)$ follows
upon taking the quotient of these two inequalities.
\end{proof}

%This section contains the proof of Theorem~\ref{thm:topology},
%which we restate here for convenience.
\begin{theorem} \label{thm:top}
For a metric space $(\X,\sigma)$, the following are equivalent:
\begin{enumerate}
\item \label{top:1}
The completion of $\X$ is a compact topological space.
\item \label{top:2}
For every $r>0$, $\X$ can be covered by a finite collection
of balls of radius $r$.
\item \label{top:3}
For every $r>0$, $\X$ does not contain an infinite collection
of pairwise disjoint balls of radius $r$.
\end{enumerate}
\end{theorem}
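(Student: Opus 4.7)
The plan is to establish \refeq{top:2} $\Leftrightarrow$ \refeq{top:3} by direct elementary arguments using the triangle inequality, and then close the loop with \refeq{top:1} $\Leftrightarrow$ \refeq{top:2}, which is the classical characterization of totally bounded metric spaces whose completion is compact.

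For \refeq{top:2} $\Rightarrow$ \refeq{top:3}, I would argue by contrapositive pigeonhole: if balls of radius $r/2$ centered at $x_1,\ldots,x_n$ cover $\X$, and $\{B(y_j,r/2)\}_{j \in J}$ is any pairwise disjoint collection, then each $y_j$ lies in some $B(x_{i(j)}, r/2)$, and if $i(j_1)=i(j_2)$ with $j_1 \neq j_2$ then by the triangle inequality $\sigma(y_{j_1},y_{j_2}) < r$, contradicting disjointness of the balls $B(y_j,r/2)$. Hence $|J| \leq n$, so applying this at scale $r/2$ in \refeq{top:2} rules out infinite disjoint collections at radius $r/2$, which (by rescaling) yields \refeq{top:3}. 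For \refeq{top:3} $\Rightarrow$ \refeq{top:2}, I would again use the contrapositive: if no finite cover by balls of radius $r$ exists, then by greedy recursion I can pick $x_1 \in \X$, and given $x_1, \ldots, x_n$ choose $x_{n+1} \notin \bigcup_{i \leq n} B(x_i,r)$, producing an infinite sequence with pairwise distances at least $r$, hence pairwise disjoint balls $B(x_i,r/2)$.

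For \refeq{top:1} $\Rightarrow$ \refeq{top:2}, I would use that $\X$ is dense in its completion $\bar{\X}$: for any $r > 0$, the open cover of $\bar{\X}$ by the balls $\{B_{\sigma}(x,r) : x \in \X\}$ admits a finite subcover by compactness, and restricting the centers to $\X$ yields a finite $r$-cover of $\X$. For \refeq{top:2} $\Rightarrow$ \refeq{top:1}, the key point is that total boundedness passes to the completion (any $r/2$-cover of $\X$ is an $r$-cover of $\bar{\X}$, since every point of $\bar{\X}$ is within $r/2$ of a point of $\X$), so $\bar{\X}$ is complete and totally bounded. I would then show $\bar{\X}$ is sequentially compact: given a sequence $(z_n)$ in $\bar{\X}$, cover $\bar{\X}$ by finitely many balls of radius $2^{-k}$ for each $k$ and extract nested infinite subsequences by a diagonal argument, producing a Cauchy subsequence which converges by completeness.

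The main obstacle is the \refeq{top:2} $\Rightarrow$ \refeq{top:1} direction, since it requires invoking (or reproving) the standard fact that a complete totally bounded metric space is compact, together with care that total boundedness genuinely transfers from $\X$ to $\bar{\X}$. Everything else is a short triangle-inequality or greedy argument. Since this is a purely topological statement relegated to the appendix, I expect to cite standard metric-topology references for the last step rather than reproducing the diagonal Cauchy-extraction argument in full.
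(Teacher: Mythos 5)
Your decomposition is genuinely different from the paper's: the paper proves the single cycle \refeq{top:2} $\Rightarrow$ \refeq{top:1} $\Rightarrow$ \refeq{top:3} $\Rightarrow$ \refeq{top:2}, while you prove the two equivalences \refeq{top:2} $\Leftrightarrow$ \refeq{top:3} and \refeq{top:1} $\Leftrightarrow$ \refeq{top:2}. Your greedy selection of $r$-separated points for \refeq{top:3} $\Rightarrow$ \refeq{top:2} is the same idea as the paper's maximal-disjoint-family argument in contrapositive form. Your \refeq{top:1} $\Rightarrow$ \refeq{top:2} (finite subcover of the completion with centers drawn from the dense subset $\X$) replaces the paper's direct \refeq{top:1} $\Rightarrow$ \refeq{top:3} (centers of infinitely many disjoint balls have no limit point); both are short. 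For \refeq{top:2} $\Rightarrow$ \refeq{top:1}, the paper gives a self-contained pigeonhole construction of nested sets $S_k$ and extracts a Cauchy subsequence of a sequence lying in $\X$ (leaving the passage to sequences in the completion implicit), whereas your version works with sequences in $\bar{\X}$ directly after verifying that total boundedness transfers to the completion --- arguably cleaner, though deferring the complete-plus-totally-bounded-implies-compact step to a citation sacrifices the self-containedness the paper opts for in its appendix.

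There is one step that fails as written: in \refeq{top:2} $\Rightarrow$ \refeq{top:3} you conclude that $\sigma(y_{j_1},y_{j_2}) < r$ ``contradicts disjointness of the balls $B(y_j,r/2)$.'' In a general metric space this inference is invalid, because there need not be any point between the centers: take $\X = \{0, 0.9\} \subset \R$ and $r = 1$; then $B(0,1/2) = \{0\}$ and $B(0.9,1/2) = \{0.9\}$ are disjoint even though the centers are at distance $0.9 < r$. Fortunately the repair is one line and already present in your setup: if $i(j_1) = i(j_2) = i$ with $j_1 \neq j_2$, then $\sigma(x_i, y_{j_1}) < r/2$ and $\sigma(x_i, y_{j_2}) < r/2$, so the shared center $x_i$ itself lies in $B(y_{j_1},r/2) \cap B(y_{j_2},r/2)$, contradicting disjointness directly. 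With that substitution the pigeonhole bound $|J| \leq n$ holds and the rest of your argument goes through.
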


\begin{proof}
{\bf \refeq{top:2} $\Rightarrow$ \refeq{top:1}}
Assume that property \refeq{top:2} holds.
Recall that a metric space is compact if and only if 
every infinite sequence of points has a convergent subsequence,
and it is complete if and only if every Cauchy sequence
is convergent.  Thus, we must prove that every infinite
sequence $x_1,x_2,\ldots$ in $\X$ has a Cauchy subsequence.
We can use a pigeonhole-principle argument
to construct the Cauchy subsequence.  In fact,
the construction will yield a sequence of 
points $z_1,z_2,\ldots$ and sets $S_1,S_2,\ldots$
such that the diameter of $S_k$ is at most $1/k$
and $z_i \in S_k$ for all $i \geq k$; these two
properties immediately imply that $z_1,z_2,\ldots$
is a Cauchy sequence as desired.

The construction begins by defining $S_0 = \X$.  Now,
for any $k > 0$, assume inductively that we have a set
$S_{k-1}$ such that the relation $x_i \in S_{k-1}$ is
satisfied by infinitely many $i$.  Let $B_1,B_2,\ldots,B_{n(k)}$ 
be a finite collection of balls of radius $\frac{1}{2k}$
that covers $\X$.  There must be at least one value of 
$j$ such that the relation $x_i \in S_{k-1} \cap B_j$
is satisfied by infinitely many $i$.  Let $S_k = S_{k-1} \cap B_j$
and let $z_k$ be any point in the sequence $x_1,x_2,\ldots$
that belongs to $S_k$ and occurs strictly later in the 
sequence than $z_{k-1}$.  This completes the construction
of the Cauchy subsequence and establishes that the 
completion of $\X$ is compact.

{\bf \refeq{top:1} $\Rightarrow$ \refeq{top:3}}
If $\X$ contains an infinite collection of pairwise disjoint
balls of radius $r$, then the centers of these balls form an
infinite set with no limit point in $\X$, violating compactness.

{\bf \refeq{top:3} $\Rightarrow$ \refeq{top:2}}
Given $r > 0$, let $B(x_1,r/2) \LDOTS B(x_n,r/2)$
be a maximal collection of disjoint balls of radius $r/2$.  
(Such a collection must be finite, by property \refeq{top:3}.)
The balls $B(x_1,r) \LDOTS B(x_n,r)$ cover $\X$, because
if there were a point $y \in \X$ not covered by these balls,
then $B(y,r/2)$ would be disjoint from 
$B(x_i,r/2)$ for $i=1 \LDOTS n,$ contradicting the maximality
of the collection.
\end{proof}

\end{document}